\newcommand{\copyrightnote}[2]{{\renewcommand{\thefootnote}{}
 \footnotetext{\small\it
\begin{flushleft}
 \copyright \ #1   #2  
\end{flushleft}}}}
\newcommand{\Name}[1]{\begin{flushleft}
                       \LARGE \bf #1
                       \end{flushleft}\vspace{-3mm}}
\newcommand{\Author}[1]{\begin{flushleft}
                       \it #1 \end{flushleft}}
\newcommand{\Address}[1]{\begin{flushleft}
                       \it #1 \end{flushleft}}
\newcommand{\Date}[1]{\begin{flushleft}
                      \small  \it #1 \end{flushleft}}
\newcommand{\evenhead}{Author \ name}
\newcommand{\oddhead}{Article \ name}
\renewcommand{\@evenhead}{
\hspace*{-3pt}\raisebox{-15pt}[\headheight][0pt]{\vbox{\hbox to \textwidth
{\thepage \hfil \evenhead}\vskip4pt \hrule}}}
\renewcommand{\@oddhead}{
\hspace*{-3pt}\raisebox{-15pt}[\headheight][0pt]{\vbox{\hbox to \textwidth
{\oddhead \hfil \thepage}\vskip4pt\hrule}}}
\renewcommand{\@evenfoot}{}
\renewcommand{\@oddfoot}{}
\long\def\@makecaption#1#2{%
  \vskip\abovecaptionskip
  \sbox\@tempboxa{\small \textbf{#1.}\ \ #2}%
  \ifdim \wd\@tempboxa >\hsize
    {\small \textbf{#1.}\ \ #2}\par
  \else
    \global \@minipagefalse
    \hb@xt@\hsize{\hfil\box\@tempboxa\hfil}%
  \fi
  \vskip\belowcaptionskip}
\newcommand{\JNMPnumberwithin}[3][\arabic]{%
  \@ifundefined{c@#2}{\@nocounterr{#2}}{%
    \@ifundefined{c@#3}{\@nocnterr{#3}}{%
      \@addtoreset{#2}{#3}%
      \@xp\xdef\csname the#2\endcsname{%
        \@xp\@nx\csname the#3\endcsname .\@nx#1{#2}}}}%
}
\renewenvironment{proof}[1][\proofname]{\par
  \normalfont
  \topsep6\p@\@plus6\p@ \trivlist
  \item[\hskip\labelsep\textbf{%
    #1\@addpunct{.}}]\ignorespaces
}{%
  \qed\endtrivlist
}
\newcommand{\resetfootnoterule} {
  \renewcommand\footnoterule{%
  \kern-3\p@
  \hrule\@width.4\columnwidth
  \kern2.6\p@}
}
\renewcommand{\footnoterule}{}
\newcommand{\bbR}{{\mathbb R}}
\theoremstyle{definition}
\newtheorem{theorem}{Theorem}
\begin{document}

\renewcommand{\evenhead}{ {\LARGE\textcolor{blue!10!black!40!green}{{\sf \ \ \ ]ocnmp[}}}\strut\hfill Yuri B Suris}
\renewcommand{\oddhead}{ {\LARGE\textcolor{blue!10!black!40!green}{{\sf ]ocnmp[}}}\ \ \ \ \ Integrals of discretizations by polarization}

\thispagestyle{empty}
\newcommand{\FistPageHead}[3]{
\begin{flushleft}
\raisebox{8mm}[0pt][0pt]
{\footnotesize \sf
\parbox{150mm}{{Open Communications in Nonlinear Mathematical Physics}\ \ \ \ {\LARGE\textcolor{blue!10!black!40!green}{]ocnmp[}}
\quad Special Issue 1, 2024\ \  pp
#2\hfill {\sc #3}}}\vspace{-13mm}
\end{flushleft}}

\FistPageHead{1}{\pageref{firstpage}--\pageref{lastpage}}{ \ \ }

\strut\hfill

\strut\hfill

\copyrightnote{The author(s). Distributed under a Creative Commons Attribution 4.0 International License}

\begin{center}
{  {\bf This article is part of an OCNMP Special Issue\\ 
\smallskip
in Memory of Professor Decio Levi}}
\end{center}

\smallskip

\Name{A new approach to integrals of discretizations by polarization}

\Author{Yuri B. Suris}

\Address{Institut f\"ur Mathematik, MA 7-1,
Technische Universit\"at Berlin, Str. des 17. Juni 136,
10623 Berlin, Germany.
E-mail: {\tt  suris@math.tu-berlin.de}}

\Date{Received July 12, 2023; Accepted October 6, 2023}

\setcounter{equation}{0}

\begin{abstract}

\noindent 
Recently, a family of unconventional integrators for ODEs with polynomial vector fields was proposed, based on the polarization of vector fields. The simplest instance is the by now famous Kahan discretization for quadratic vector fields. All these integrators seem to possess remarkable conservation properties. In particular, it has been proved that, when the underlying ODE is Hamiltonian, its polarization discretization possesses an integral of motion and an invariant volume form. In this note, we propose a new algebraic approach to derivation of the integrals of motion for  polarization discretizations.

\end{abstract}

\label{firstpage}


\section{Introduction}

The by now famous Kahan discretization \cite{K} is a one-step numerical method designed specially for ODEs in $\bbR^d$,
\begin{equation}\label{eq dyn syst}
\dot x=f(x),
\end{equation}
with all components of the vector field $f$ being polynomials of degree 2. The Kahan discretization with the stepsize $\epsilon$ is the following difference equation:
\begin{equation}\label{eq Kahan}
(x_{n+1}-x_n)/\epsilon ={\rm pol}_2f(x_n,x_{n+1}).
\end{equation}
Here, for any quadratic form $Q(x)$ on $\bbR^d$, its polarization is the corresponding symmetric bilinear form,
$$
{\rm pol}_2Q(x_1,x_2)=\frac{1}{2}\big(Q(x_1+x_2)-Q(x_1)-Q(x_2)\big).
$$
For a non-homogeneous polynomial $P(x)$ of degree 2, one extends it to a quadratic form in homogeneous coordinates, $\widetilde P(x,z)=z^2 P(x/z)$, computes the bilinear symmetric form ${\rm pol}_2\widetilde P$ and then sets ${\rm pol}_2P={\rm pol}_2\widetilde P|_{z_1=z_2=1}$. In particular, for a linear form $L(x)$ we obtain ${\rm pol}_2L(x_1,x_2)=(L(x_1)+L(x_2))/2$.

Equation \eqref{eq Kahan} is linear with respect to $x_{n+1}$, thus can be solved to give a rational map
\begin{equation}\label{eq Kahan map}
x_{n+1}=f_\epsilon(x_n).
\end{equation}
Moreover, due to the symmetry of equation \eqref{eq Kahan} with respect to $x_n\leftrightarrow x_{n+1}$ and $\epsilon\leftrightarrow -\epsilon$, we have $f_\epsilon^{-1}=f_{-\epsilon}$, in particular, $f_\epsilon$ is a birational map.

Kahan's discretization is known to inherit integrals and integral invariants much more frequently than could be anticipated, see \cite{PPS1, PPS2, CMOQ1, CMOQ2} and a more recent literature. In the present note, we will address the remarkable result of \cite{CMOQ1} which states that map $f_\epsilon$ always possesses an integral of motion, if $f(x)$ is a quadratic {\em Hamiltonian} vector field in the space of an even dimension $d$, that is, $f(x)=J\nabla H(x)$, where $H$ is a polynomial of degree 3, and $J\in {\rm so}(d)$ is a non-degenerate skew-symmetric matrix. Our Theorem \ref{th d Ham system int} in Section \ref{sect Ham Kahan} gives a novel derivation and an algebraic interpretation of this result.
\smallskip

A wide generalization of the Kahan discretization for polynomial vector fields of higher degrees was proposed in \cite{CMOQ3}. The most interesting version of this approach deals with higher order differential equations, for which the discretization preserves the dimension of the phase space \cite{HQ}. Consider a second order differential equation in $\bbR^d$,
\begin{equation}\label{eq 2nd order}
\ddot x =g(x),
\end{equation}
where all components of $g(x)$ are polynomials of degree 3. The polarization discretization of such an equation with the stepsize $\epsilon$ is the following second order difference equation:
\begin{equation}\label{eq 2nd order discrete}
(x_{n+1}-2x_n+x_{n-1})/\epsilon^2={\rm pol}_3g(x_{n-1},x_n,x_{n+1}).
\end{equation}
Here, the third order polarization ${\rm pol}_3$ for a cubic form $C(x)$ is the corresponding symmetric trilinear form 
\begin{eqnarray*}
{\rm pol}_3C(x_1,x_2,x_3) & = & \frac{1}{6}\big(C(x_1+x_2+x_3)-C(x_1+x_2)-C(x_1+x_3)-C(x_2+x_3)\\
 & & +C(x_1)+C(x_2)+C(x_3)\big).
\end{eqnarray*}
For a non-homogeneous polynomial $P(x)$ of degree 3, one first extends it to a cubic form in homogeneous coordinates, $\widetilde P(x,z)=z^3 P(x/z)$, computes the trilinear symmetric form ${\rm pol}_3\widetilde P$, and then sets ${\rm pol}_3P={\rm pol}_3\widetilde P|_{z_1=z_2=z_3=1}$. For a quadratic form $Q(x)$, we find:
\begin{equation}\label{eq pol3 Q}
{\rm pol}_3Q(x_1,x_2,x_3)=\frac{1}{3}\big({\rm pol}_2Q(x_1,x_2)+{\rm pol}_2Q(x_1,x_3)+{\rm pol}_2Q(x_2,x_3)\big),
\end{equation}
while for a linear form $L(x)$, we find:
\begin{equation}\label{eq pol3 L}
{\rm pol}_3L(x_1,x_2,x_3)=\frac{1}{3}\big(L(x_1)+L(x_2)+L(x_3)\big).
\end{equation}

Again, equation \eqref{eq 2nd order discrete} is linear with respect to $x_{n+1}$, thus can be solved to give a birational map
\begin{equation}\label{eq polarization map}
(x_n,x_{n+1})=g_\epsilon(x_{n-1},x_n),
\end{equation}
which enjoys the symmetry with respect to $x_{n-1}\leftrightarrow x_{n+1}$. 

Let $g(x)=K\nabla W(x)$, where $K\in {\rm Symm}(d)$ is a non-degenerate symmetric matrix, and $W(x)$ is a polynomial of degree 4. Then equation \eqref{eq 2nd order} is equivalent to a canonical Hamiltonian system with the Hamilton function $H(x,p)=\frac{1}{2}\langle p, Kp\rangle+W(x)$. Indeed, equations of motion of the latter read $\dot x=Kp$, $\dot p=-\nabla W(x)$.  A remarkable result of \cite{HQ, MMQ} states that in this case map $g_\epsilon$ possesses an integral of motion. Our Theorem \ref{th Ham syst 2nd order}  in Section \ref{sect 2nd order} gives a novel derivation and algebraic interpretation of this result.

\section{Hamiltonian systems with a cubic integral}
\label{sect Ham Kahan}

Consider a Hamiltonian system
\begin{equation}\label{eq Ham system}
\dot{x}=J\nabla H, 
\end{equation}
where $H:\bbR^{d}\to \bbR$ is a polynomial of degree 3, and $J\in\text{so}(d)$ is a non-degenerate matrix (so that $d$ is necessarily even). It is well known that $H(x)$ is an integral of motion for \eqref{eq Ham system}. We consider the Kahan discretization for \eqref{eq Ham system}, see equation \eqref{eq Kahan}.

\begin{theorem}\label{th d Ham system int}
Separate $H$ into homogeneous parts of degrees 3, 2, and 1:
\begin{equation}
H(x)=H_3(x)+H_2(x)+H_1(x).
\end{equation}
Then the following quantity is a conserved quantity for the difference equation \eqref{eq Kahan}:
\begin{equation}\label{eq d Ham int 1}
H_\epsilon(x_n,x_{n+1})  = \tfrac{1}{ \epsilon}\langle x_n\,,J^{-1}x_{n+1}\rangle +{\rm pol}_2H_2(x_n,x_{n+1})+2\,{\rm pol}_2H_1(x_n,x_{n+1}).
\end{equation}
\end{theorem}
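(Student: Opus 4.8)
\medskip
\noindent\emph{Proof strategy.}
Since equation \eqref{eq Kahan} couples only the consecutive iterates $x_n$ and $x_{n+1}$, to show that $H_\epsilon$ is conserved it is enough to prove that $H_\epsilon(x_{n-1},x_n)=H_\epsilon(x_n,x_{n+1})$ whenever both pairs $(x_{n-1},x_n)$ and $(x_n,x_{n+1})$ solve \eqref{eq Kahan}. The first step would be to rewrite the Kahan equation conveniently: as $J$ is a constant matrix, ${\rm pol}_2f={\rm pol}_2(J\nabla H)=J\,{\rm pol}_2\nabla H$, and, splitting $\nabla H=\nabla H_3+\nabla H_2+\nabla H_1$ into its (componentwise) quadratic, linear and constant parts, the polarization rules from the Introduction give $x_{n+1}-x_n=\epsilon\,J\,P(x_n,x_{n+1})$ with
\[
P(a,b):={\rm pol}_2\nabla H_3(a,b)+\tfrac12\bigl(\nabla H_2(a)+\nabla H_2(b)\bigr)+\nabla H_1 .
\]
For brevity I write $a=x_{n-1}$, $b=x_n$, $c=x_{n+1}$, so the two relations at hand are $b-a=\epsilon\,J\,P(a,b)$ and $c-b=\epsilon\,J\,P(b,c)$.

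Of the three terms of $H_\epsilon$, the two polarization terms telescope trivially, so the crux is the ``symplectic'' term $\tfrac1\epsilon\langle x_n,J^{-1}x_{n+1}\rangle$. The plan is to rewrite its increment by the two-sided split $\langle b,J^{-1}c\rangle-\langle a,J^{-1}b\rangle=\langle b-a,J^{-1}c\rangle+\langle a,J^{-1}(c-b)\rangle$ and then to substitute the two Kahan relations; using skew-symmetry of $J$ (so $\langle Ju,v\rangle=-\langle u,Jv\rangle$), this should collapse to
\[
\tfrac1\epsilon\bigl(\langle b,J^{-1}c\rangle-\langle a,J^{-1}b\rangle\bigr)=\langle a,P(b,c)\rangle-\langle c,P(a,b)\rangle .
\]
Next I would expand the right-hand side degree by degree. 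For the cubic part the relevant identity is $\langle u,{\rm pol}_2\nabla H_3(v,w)\rangle=3\,{\rm pol}_3H_3(u,v,w)$, obtained by polarizing in $x$ the Euler-type relation $\langle\xi,\nabla H_3(x)\rangle=3\,{\rm pol}_3H_3(x,x,\xi)$ (and using that ${\rm pol}_2$ of a gradient is taken componentwise); since ${\rm pol}_3H_3$ is totally symmetric, $\langle a,{\rm pol}_2\nabla H_3(b,c)\rangle=\langle c,{\rm pol}_2\nabla H_3(a,b)\rangle$ and the cubic contribution drops out entirely. In the linear part, symmetry of the bilinear form $\langle u,\nabla H_2(v)\rangle\,(=2\,{\rm pol}_2H_2(u,v))$ cancels the $\langle a,\nabla H_2(c)\rangle$ against the $\langle c,\nabla H_2(a)\rangle$ piece, leaving ${\rm pol}_2H_2(a,b)-{\rm pol}_2H_2(b,c)$; the constant part gives $\langle a,\nabla H_1\rangle-\langle c,\nabla H_1\rangle=H_1(a)-H_1(c)$.

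Putting this together with the increments of the two polarization terms, ${\rm pol}_2H_2(b,c)-{\rm pol}_2H_2(a,b)$ and $2{\rm pol}_2H_1(b,c)-2{\rm pol}_2H_1(a,b)=H_1(c)-H_1(a)$, every surviving piece cancels in an obvious pair, so $H_\epsilon(x_n,x_{n+1})-H_\epsilon(x_{n-1},x_n)=0$, as claimed. I expect the only non-routine step to be the choice of the two-sided split above: it is what converts the awkward symplectic increment into the manifestly skew pairing $\langle a,P(b,c)\rangle-\langle c,P(a,b)\rangle$, after which the total symmetry of ${\rm pol}_3H_3$ and the symmetry of the Hessian of $H_2$ do all the work. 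The remaining verifications — the polarization rules for $P$, the identity for ${\rm pol}_2\nabla H_3$, and the final cancellations — are mechanical, and the ``algebraic'' reading promised in the Introduction should be the observation that the whole argument takes place in the ring generated by the polarized forms, with $J$ entering only through its skew-symmetry.
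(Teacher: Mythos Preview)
Your argument is correct and is essentially the same as the paper's proof. The paper obtains your key identity
\[
\tfrac{1}{\epsilon}\bigl(\langle b,J^{-1}c\rangle-\langle a,J^{-1}b\rangle\bigr)=\langle a,P(b,c)\rangle-\langle c,P(a,b)\rangle
\]
by pairing the Kahan equation at step $n$ with $x_{n-1}$ and the downshifted equation with $x_{n+1}$ and subtracting, whereas you reach it via the telescoping split $\langle b,J^{-1}c\rangle-\langle a,J^{-1}b\rangle=\langle b-a,J^{-1}c\rangle+\langle a,J^{-1}(c-b)\rangle$; after that, both proofs invoke the same symmetry of ${\rm pol}_3H_3$ to kill the cubic contribution and the same bookkeeping for $H_2$ and $H_1$.
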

\begin{proof} 
We are dealing with the following difference equation:
\begin{equation}\label{eq Ham Kahan sep}
J^{-1}(x_{n+1}-x_n)/\epsilon={\rm pol}_2 \nabla H_3(x_n,x_{n+1})+{\rm pol}_2 \nabla H_2(x_n,x_{n+1})+{\rm pol}_2 \nabla H_1(x_n,x_{n+1}).
\end{equation}
Take the scalar product of equation \eqref{eq Ham Kahan sep} with $x_{n-1}$:
\begin{eqnarray}\label{eq aux Ham 1}
\lefteqn{\tfrac{1}{\epsilon}\langle x_{n-1}, J^{-1}x_{n+1}\rangle - \tfrac{1}{\epsilon}\langle x_{n-1}, J^{-1}x_{n}\rangle}\nonumber \\
 & = & 3\,{\rm pol}_3 H_3(x_{n-1},x_n,x_{n+1}) +{\rm pol}_2 H_2(x_{n-1},x_n)+{\rm pol}_2 H_2(x_{n-1},x_{n+1}) +H_1(x_{n-1}). 
 \qquad 
\end{eqnarray}
Here we used Euler's theorem on homogeneous functions and have taken into account that for a quadratic form $H_2$ there holds ${\rm pol}_2 \nabla H_2(x_n,x_{n+1})=(\nabla H_2(x_n)+\nabla H_2(x_{n+1}))/2$, and for a linear form $H_1$ its gradient $\nabla H_1$ is a constant vector.
Similarly, take the scalar product of the downshifted (i.e., $n\to n-1$) equation  \eqref{eq Ham Kahan sep} with $x_{n+1}$:
\begin{eqnarray}\label{eq aux Ham 2}
\lefteqn{\tfrac{1}{\epsilon}\langle x_{n+1}, J^{-1}x_n\rangle - \tfrac{1}{\epsilon}\langle x_{n+1}, J^{-1}x_{n-1}\rangle}\nonumber \\
 & = & 3\,{\rm pol}_3 H_3(x_{n-1},x_n,x_{n+1}) +{\rm pol}_2 H_2(x_n,x_{n+1})+{\rm pol}_2 H_2(x_{n-1},x_{n+1}) +H_1(x_{n+1}). 
 \qquad 
\end{eqnarray}
Subtracting the latter two equations (taking into account the skew-symmetry of $J^{-1}$) leads to
\begin{eqnarray}\label{eq aux Ham 3}
\lefteqn{\tfrac{1}{\epsilon}\langle x_{n+1}, J^{-1}x_n\rangle - \tfrac{1}{\epsilon}\langle x_n, J^{-1}x_{n-1}\rangle}\nonumber \\
 & = & {\rm pol}_2 H_2(x_n,x_{n+1})-{\rm pol}_2 H_2(x_{n-1},x_n) +H_1(x_{n+1})-H_1(x_{n-1}). 
 \qquad 
\end{eqnarray}
This is equivalent to \eqref{eq d Ham int 1} being a conserved quantity.
\end{proof}

{\bf Discussion.}
\smallskip

1) It is not very common to express conserved quantities of a  first order difference equation in terms of more than one iterate. To avoid misconceptions, we stress that the statement that $H_\epsilon(x_n,x_{n+1})$ is a conserved quantity of the difference equation \eqref{eq Ham Kahan sep} means that 
$$
H_\epsilon(x,f_\epsilon(x))=H_\epsilon(f_\epsilon(x),f^2_\epsilon(x)).
$$
In other words, $H_\epsilon(x,f_\epsilon(x))$ is an integral of motion of the map $f_\epsilon$. It is in this latter form that the integral has been found in \cite{CMOQ1}. Earlier examples of expressions of conserved quantities of Kahan discretizations in terms of more than one iterate have been found in \cite{PPS2, PS}. 

2)  If $H(x)$ is homogeneous of degree 3, we get an especially simple conserved quantity $\epsilon H_\epsilon(x_n)=
\langle x_n\,,J^{-1}x_{n+1}\rangle$. This particular result was found previously in \cite{CMOQ3} as a special case of a more general statement for discretization by polarization.

3) It is instructive to look at the continuous time counterpart of this result. We derive, by Euler's theorem on homogeneous functions:
$$
\langle x, J^{-1}\dot x\rangle =\langle x,\nabla H(x)\rangle =3H_3(x)+2H_2(x)+H_1(x).
$$
As a consequence, the quantity
$$
\langle x, J^{-1}\dot x\rangle +H_2(x)+2H_1(x)
$$
is an integral of motion (equals $3H(x)$). In particular, if $H(x)$ is homogeneous of degree 3, we get a simple expression $\langle x,J^{-1}\dot x\rangle$ for the integral of motion.
\medskip

\textbf{Example.} Take $d=2$, $x=\begin{pmatrix} q \\p\end{pmatrix}$, $J=\begin{pmatrix} 0 & 1 \\ -1 & 0\end{pmatrix}$, so that 
$J^{-1}=\begin{pmatrix} 0 & -1 \\ 1 & 0\end{pmatrix}$, and set
\begin{eqnarray}
H_3(q,p) & = & a_{30}q^3+a_{21}q^2p+a_{12}qp^2+a_{03}p^3,\\
H_2(q,p) & = & a_{20}q^2+a_{11}qp+a_{02}p^2,\\
H_1(q,p) & = & a_{10}q+a_{01}p.
\end{eqnarray}
Thus, equations of motion \eqref{eq Ham system} read
\begin{eqnarray}
\dot q & = & a_{21}q^2+2a_{12}qp+3a_{03}p^2+a_{11}q+2a_{02}p+a_{01}, \\
\dot p & = & -3a_{30}q^2-2a_{21}qp-a_{12}p^2-2a_{20}q-a_{1}p-a_{10}, 
\end{eqnarray}
while their Kahan discretization reads
\begin{eqnarray}
(q_{n+1}-q_n)/ \epsilon & = & a_{21}q_nq_{n+1}+a_{12}(q_np_{n+1}+p_nq_{n+1})+3a_{03}p_np_{n+1}\nonumber\\
&& +\tfrac{1}{2}a_{11}(q_n+q_{n+1})+a_{02}(p_n+p_{n+1})+a_{01}, \label{eq d Ham q}\\
(p_{n+1}-p_n)/\epsilon & = & -3a_{30}q_nq_{n+1}-a_{21}(q_np_{n+1}+p_nq_{n+1})-a_{12}p_np_{n+1}\nonumber\\
&& -a_{20}(q_n+q_{n+1})-\tfrac{1}{2}a_{11}(p_n+p_{n+1})-a_{10}. \label{eq d Ham p}
\end{eqnarray}
Conserved quantity \eqref{eq d Ham int 1} takes the form
\begin{eqnarray}
H_\epsilon(q_n,p_n,q_{n+1},p_{n+1}) & = & \tfrac{1}{ \epsilon}(p_nq_{n+1}-q_np_{n+1})\nonumber\\
&& +a_{20}q_nq_{n+1}+\tfrac{1}{2}a_{11}(p_nq_{n+1}+q_np_{n+1})+a_{02}p_np_{n+1}\nonumber\\
&& +a_{01}(p_n+p_{n+1})+a_{10}(q_n+q_{n+1}). 
\end{eqnarray}
If $H(q,p)$ is homogeneous of degree 3, we get a quite simple conserved quantity $\epsilon H_\epsilon=p_nq_{n+1}-q_np_{n+1}$.
The continuous time limit of $H_\epsilon$ is the expression
$$
p\dot q-q\dot p +H_2(q,p)+2H_1(q,p),
$$
which is an integral of motion (equals $3H(q,p)$). In particular, if $H(q,p)$ is homogeneous of degree 3, we get a simple ``Wronskian'' expression $p\dot q-q\dot p$ for the integral of motion.

\section{Second order Hamiltonian systems with a quartic potential}
\label{sect 2nd order}

Consider a Hamiltonian system
\begin{equation}\label{eq Ham system 2nd order}
\ddot{x}=-K\nabla W, 
\end{equation}
where $W:\bbR^{d}\to \bbR$ is a polynomial of degree 4, and $K$ is a symmetric $d\times d$ matrix. This system admits an integral of motion
\begin{equation}\label{eq int 2nd order}
H(x,\dot x)=\frac{1}{2}\langle \dot x, K^{-1}\dot x\rangle +W(x).
\end{equation}
The right-hand side of equation \eqref{eq Ham system 2nd order} is of degree 3, and we consider the corresponding discretization by polarization, see \eqref{eq 2nd order discrete}.

\begin{theorem}\label{th Ham syst 2nd order} 
Separate the potential $W$ into homogeneous parts of degrees 4, 3, 2, and 1:
\begin{equation}
W(x)=W_4(x)+W_3(x)+W_2(x)+W_1(x).
\end{equation}
Then the following quantity is a conserved quantity of the difference equation \eqref{eq 2nd order discrete}:
\begin{eqnarray}\label{eq d Ham 2nd order int}
\lefteqn{H_\epsilon(x_{n-1},x_n,x_{n+1})  = \tfrac{1}{ \epsilon^2}\,\big(\langle x_{n-1},K^{-1}x_{n}\rangle -2\langle x_{n-1},K^{-1}x_{n+1}\rangle
+\langle x_{n},K^{-1}x_{n+1} \rangle\big)}
\nonumber\\
&& +{\rm pol}_3W_3(x_{n-1},x_n,x_{n+1})+2\,{\rm pol}_3W_2(x_{n-1},x_n,x_{n+1})+3\,{\rm pol}_3W_1(x_{n-1},x_n,x_{n+1}).\qquad\; \nonumber\\
\end{eqnarray}
\end{theorem}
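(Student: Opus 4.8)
The plan is to mimic the strategy of the proof of Theorem \ref{th d Ham system int}, now exploiting the symmetry $x_{n-1}\leftrightarrow x_{n+1}$ of the second order difference equation \eqref{eq 2nd order discrete} and the fact that the right-hand side is $\nabla$ of a quartic potential. First I would rewrite \eqref{eq 2nd order discrete} in the form
\begin{equation*}
K^{-1}(x_{n+1}-2x_n+x_{n-1})/\epsilon^2={\rm pol}_3\nabla W_4(x_{n-1},x_n,x_{n+1})+{\rm pol}_3\nabla W_3+{\rm pol}_3\nabla W_2+{\rm pol}_3\nabla W_1,
\end{equation*}
where in the last two terms one uses that ${\rm pol}_3\nabla W_2$ reduces via \eqref{eq pol3 Q} to an average of the (linear) gradients $\nabla W_2$ over pairs of points, and $\nabla W_1$ is a constant vector, so ${\rm pol}_3\nabla W_1=\nabla W_1$.

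Next I would take the scalar product of this equation with a suitable fourth iterate and of an appropriately shifted copy with another iterate, exactly as in \eqref{eq aux Ham 1}--\eqref{eq aux Ham 2}. Concretely, pair the equation written at level $n$ (involving $x_{n-1},x_n,x_{n+1}$) with $x_{n-2}$, and pair the downshifted equation (at level $n-1$, involving $x_{n-2},x_{n-1},x_n$) with $x_{n+1}$. On the right-hand sides, $\langle x_{n-2},{\rm pol}_3\nabla W_4(x_{n-1},x_n,x_{n+1})\rangle$ and $\langle x_{n+1},{\rm pol}_3\nabla W_4(x_{n-2},x_{n-1},x_n)\rangle$ both equal a constant multiple of the fully symmetric ${\rm pol}_4 W_4(x_{n-2},x_{n-1},x_n,x_{n+1})$ by Euler's theorem applied to the homogeneous degree-4 form $W_4$ (the scalar product of a point with the polarized gradient at the remaining points reproduces the polarization of $W_4$ itself, up to the Euler factor $4$). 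Crucially this $W_4$-contribution is the \emph{same} in both pairings, so it cancels upon subtraction — this is the analogue of the cancellation of the $3\,{\rm pol}_3H_3$ terms in the proof of Theorem \ref{th d Ham system int}. The $W_3$ terms contribute symmetric trilinear forms on triples of the four points, which after the Euler factor $3$ and the subtraction telescope into a difference of ${\rm pol}_3W_3$ evaluated on $(x_{n-1},x_n,x_{n+1})$ versus $(x_{n-2},x_{n-1},x_n)$; the $W_2$ and $W_1$ terms likewise telescope, picking up the coefficients $2$ and $3$ seen in \eqref{eq d Ham 2nd order int} from their respective Euler factors $2$ and $1$ combined with the combinatorial weights in \eqref{eq pol3 Q}--\eqref{eq pol3 L}.

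On the left-hand side, the two pairings produce, after using the symmetry of $K^{-1}$, a combination of inner products $\langle x_i,K^{-1}x_j\rangle$ over consecutive indices among $\{n-2,n-1,n,n+1\}$; collecting these should reproduce exactly the discrete ``second-difference Wronskian'' $\tfrac{1}{\epsilon^2}(\langle x_{n-1},K^{-1}x_n\rangle-2\langle x_{n-1},K^{-1}x_{n+1}\rangle+\langle x_n,K^{-1}x_{n+1}\rangle)$ evaluated at two consecutive values of $n$, matching the first line of \eqref{eq d Ham 2nd order int}. The final step is to read the resulting identity as $H_\epsilon(x_{n-2},x_{n-1},x_n)=H_\epsilon(x_{n-1},x_n,x_{n+1})$, i.e. conservation along the orbit of $g_\epsilon$.

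The main obstacle I anticipate is purely bookkeeping: getting the numerical coefficients and the index pattern of the left-hand side to match \eqref{eq d Ham 2nd order int} exactly. In particular one must be careful that ${\rm pol}_3\nabla W_3$ is a trilinear form whose scalar product with a fourth point is \emph{not} simply $3\,{\rm pol}_4$ of anything (since $W_3$ is cubic, not quartic), so that term behaves like the lower-order ``$H_2$'' terms in Theorem \ref{th d Ham system int} rather than like the top-degree term — it telescopes rather than cancels — and correspondingly it is $W_4$, not $W_3$, whose contribution must drop out. Once the correct pairing (level-$n$ equation against $x_{n-2}$, level-$(n-1)$ equation against $x_{n+1}$) is fixed and Euler's theorem is applied degree by degree, the computation is routine and the telescoping structure delivers \eqref{eq d Ham 2nd order int}.
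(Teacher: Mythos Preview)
Your proposal is correct and follows essentially the same route as the paper: pair the difference equation and a shifted copy with the ``missing'' fourth iterate so that the top-degree contribution becomes $4\,{\rm pol}_4 W_4$ of all four points and cancels upon subtraction, while the lower-degree terms telescope into \eqref{eq d Ham 2nd order int}. The only cosmetic differences are that the paper uses the four points $x_{n-1},x_n,x_{n+1},x_{n+2}$ (pairing the level-$n$ equation with $x_{n+2}$ and the level-$(n+1)$ equation with $x_{n-1}$) rather than your $x_{n-2},x_{n-1},x_n,x_{n+1}$, and that you dropped the overall minus sign on the right-hand side coming from $\ddot x=-K\nabla W$; neither affects the argument.
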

\begin{proof}
We are dealing with the following difference equation:
\begin{eqnarray}\label{eq dHam 2nd order}
K^{-1}(x_{n+1}-2x_n+x_{n-1})/\epsilon^2 & = & -{\rm pol}_3 \nabla W_4(x_{n-1},x_n,x_{n+1})-{\rm pol}_3 \nabla W_3(x_{n-1},x_n,x_{n+1})\nonumber\\
 && -{\rm pol}_3 \nabla W_2(x_{n-1},x_n,x_{n+1})-{\rm pol}_3 \nabla W_1(x_{n-1},x_n,x_{n+1}).\nonumber\\
\end{eqnarray}
Take the scalar product of this equation with $x_{n+2}$:
\begin{eqnarray} \label{eq d Ham 2nd order aux1}
\lefteqn{\big(\langle x_{n+2}, K^{-1}x_{n+1}\rangle -2\langle x_{n+2}, K^{-1}x_n\rangle +\langle x_{n+2},K^{-1}x_{n-1}\rangle\big)/ \epsilon^2}\nonumber\\
&= &    -4\,{\rm pol}_4W_4(x_{n-1},x_n,x_{n+1},x_{n+2})\nonumber\\
& & -{\rm pol}_3 W_3(x_{n-1},x_n,x_{n+2})-{\rm pol}_3W_3(x_{n-1},x_{n+1},x_{n+2})-{\rm pol}_3W_3(x_n,x_{n+1},x_{n+2})\nonumber\\
& & -\frac{2}{3}\big({\rm pol}_2W_2(x_{n-1},x_{n+2})+{\rm pol}_2W_2(x_n,x_{n+2})+{\rm pol}_2W_2(x_{n+1},x_{n+2})\big)\nonumber\\
& & -W_{1}(x_{n+2}). 
\end{eqnarray}
Here we used Euler's theorem on homogeneous functions and have taken into account formulas \eqref{eq pol3 Q} for the quadratic form $\nabla W_3$ and \eqref{eq pol3 L} for the linear form $\nabla W_2$, and that $\nabla W_1$ is a constant vector. Similarly, take the scalar product of the shifted equation \eqref{eq dHam 2nd order} (i.e., $n\to n+1$), by $x_{n-1}$:
\begin{eqnarray}  \label{eq d Ham 2nd order aux2}
\lefteqn{\big(\langle x_{n-1}, K^{-1}x_{n+2}\rangle -2\langle x_{n-1}, K^{-1}x_{n+1}\rangle +\langle x_{n-1},K^{-1}x_{n}\rangle\big)/ \epsilon^2}\nonumber\\
&= &    -4\,{\rm pol}_4W_4(x_{n-1},x_n,x_{n+1},x_{n+2})\nonumber\\
& & -{\rm pol}_3 W_3(x_{n-1},x_n,x_{n+1})-{\rm pol}_3W_3(x_{n-1},x_{n},x_{n+2})-{\rm pol}_3W_3(x_{n-1},x_{n+1},x_{n+2})\nonumber\\
& & -\frac{2}{3}\big({\rm pol}_2W_2(x_{n-1},x_{n})+{\rm pol}_2W_2(x_{n-1},x_{n+1})+{\rm pol}_2W_2(x_{n-1},x_{n+2})\big)\nonumber\\
& & -W_{1}(x_{n-1}).
\end{eqnarray}
Subtracting \eqref{eq d Ham 2nd order aux2} from \eqref{eq d Ham 2nd order aux1} leads to:
\begin{eqnarray} \label{eq d Ham 2nd order aux3}
\lefteqn{\big(\langle x_{n+2}, K^{-1}x_{n+1}\rangle -2\langle x_{n+2}, K^{-1}x_n\rangle +2\langle x_{n-1}, K^{-1}x_{n+1}\rangle -\langle x_{n-1},K^{-1}x_{n}\rangle\big)/ \epsilon^2}\nonumber\\
& = & -{\rm pol}_3W_3(x_n,x_{n+1},x_{n+2})+ {\rm pol}_3 W_3(x_{n-1},x_n,x_{n+1})\nonumber\\
& & -\frac{2}{3}\big({\rm pol}_2W_2(x_n,x_{n+2})+{\rm pol}_2W_2(x_{n+1},x_{n+2})-{\rm pol}_2W_2(x_{n-1},x_{n+1})-{\rm pol}_2W_2(x_{n-1},x_{n})\big)\nonumber\\
& & -W_{1}(x_{n+2})+W_1(x_{n-1}). 
\end{eqnarray}
This is equivalent to the following expression being a conserved quantity:
\begin{eqnarray} \label{eq d Ham 2nd order aux4}
\lefteqn{\big(\langle x_{n-1},K^{-1}x_{n}\rangle-2\langle x_{n-1}, K^{-1}x_{n+1}\rangle +\langle x_{n}, K^{-1}x_{n+1}\rangle\big)/ \epsilon^2}\nonumber\\
&  & + {\rm pol}_3 W_3(x_{n-1},x_n,x_{n+1})\nonumber\\
& & +\frac{2}{3}\big({\rm pol}_2W_2(x_{n-1},x_{n})+{\rm pol}_2W_2(x_{n-1},x_{n+1})+{\rm pol}_2W_2(x_{n},x_{n+1})\big)\nonumber\\
& & +W_1(x_{n-1})+W_1(x_n)+W_1(x_{n+1}). 
\end{eqnarray}
But this is the same as \eqref{eq d Ham 2nd order int}.
\end{proof}

{\bf Discussion.}
\smallskip

1) Of course, in order to consider \eqref{eq d Ham 2nd order int} as a function of $(x_{n-1},x_n)$, one has to substitute on the right-hand side the rational expression of $x_{n+1}$ through $(x_{n-1},x_n)$, which follows from \eqref{eq dHam 2nd order}.

2)  If $W(x)$ is homogeneous of degree 4, we get an especially simple conserved quantity: 
$$
\epsilon^2 H_\epsilon(x_{n-1},x_n,x_{n+1})  = \langle x_{n-1},K^{-1}x_{n}\rangle-2\langle x_{n-1},K^{-1}x_{n+1}\rangle+
\langle x_{n},K^{-1}x_{n+1} \rangle.
$$

3) It is instructive to look at the continuous time counterpart of this result. The continuos limit of the expression on the right-hand side of \eqref{eq d Ham 2nd order int} (performed according to $x_n=x$, $x_{n\pm 1}=x\pm\epsilon \dot x+\tfrac{\epsilon^2}{2}\ddot x+O(\epsilon^3)$) equals
$$
2\langle \dot x, K^{-1}\dot x\rangle-\langle x, K^{-1}\ddot x \rangle+W_3(x)+2W_2(x)+3W_1(x). 
$$
By virtue of equations of motion \eqref{eq Ham system 2nd order}, this equals
$$
2\langle \dot x, K^{-1}\dot x\rangle+\langle x, \nabla W(x) \rangle+W_3(x)+2W_2(x)+3W_1(x)
$$
and,  by Euler's theorem on homogeneous functions, we find:
\begin{eqnarray*}
& & =2\langle \dot x, K^{-1}\dot x\rangle+\big(4 W_4(x)+3W_3(x)+2W_2(x)+W_1(x)\big)+W_3(x)+2W_2(x)+3W_1(x),\\
& & =2\langle \dot x, K^{-1}\dot x\rangle+4 W(x),
\end{eqnarray*}
which is an integral of motion $4H(x,\dot x)$, see \eqref{eq int 2nd order}.
\medskip

\textbf{Example.} We take $d=1$, $K=1$, and set
\begin{equation}
W(x)=\tfrac{1}{4}a_4x^4+\tfrac{1}{3}a_3x^3+\tfrac{1}{2}a_2x^2+a_1x.
\end{equation}
Thus, equations of motion \eqref{eq Ham system 2nd order} read
\begin{equation}
\ddot x =-a_4x^3-a_3x^2-a_2x-a_1,
\end{equation}
while their polarization discretization reads
\begin{eqnarray}\label{eq d Ham 2nd order scalar}
(x_{n+1}-2x_n+x_{n-1})/ \epsilon^2 & = & -a_4 x_{n-1}x_nx_{n+1} -\tfrac{1}{3}a_3(x_{n-1}x_n+x_{n-1}x_{n+1}+x_nx_{n+1})\nonumber\\
& & -\tfrac{1}{3}a_{2}(x_{n-1}+x_n+x_{n+1})-a_{1}. 
\end{eqnarray}
The following is a conserved quantity for the map $(x_{n-1},x_n)\mapsto (x_n,x_{n+1})$:
\begin{eqnarray}\label{eq d Ham 2nd order int scalar}
\lefteqn{H_\epsilon(x_{n-1},x_n,x_{n+1})  = \tfrac{1}{ \epsilon^2}\,\big(x_{n-1}x_{n}-2x_{n-1}x_{n+1}+x_nx_{n+1}\big)}
\nonumber\\
&& +\tfrac{1}{3}a_3x_{n-1}x_nx_{n+1}+\tfrac{1}{3}a_2(x_{n-1}x_n+x_{n-1}x_{n+1}+x_nx_{n+1})+a_1(x_{n-1}+x_n+x_{n+1}).\qquad\quad
\end{eqnarray}
Upon expressing $x_{n+1}$ through $(x_{n-1},x_n)$ by virtue of equation \eqref{eq d Ham 2nd order scalar}, this coincides with the integral found in \cite{HQ}.
 
\section{Conclusion}
It is hoped that the algebraic approach to derivation of integrals of motion for the discrete time versions of Hamiltonian systems obtained by polarization will further stimulate the development of this fascinating area, towards an ultimate understanding of all the miraculous results discovered up to this day and yet to be discovered.

\subsection*{Acknowledgements}

This research is supported by the DFG Collaborative Research Center TRR 109 ``Discretization in Geometry and Dynamics''.

\label{lastpage}
\end{document}